\newtheorem{theorem}{Theorem}[section]
\newtheorem{lemma}[theorem]{Lemma}
\newtheorem{corollary}[theorem]{Corollary}
\newtheorem{remark}{Remark}
\newcommand{\bt}{\beta}
\newcommand{\ka}{\kappa}
\newcommand{\s}{\sigma}
\newcommand{\be}{\begin{equation}}
\newcommand{\ee}{\end{equation}}
\newcommand{\bea}{\begin{eqnarray}}
\newcommand{\eea}{\end{eqnarray}}
\newcommand{\no}{\nonumber}
\numberwithin{equation}{section}
\begin{document}

\title{Hankel determinant and orthogonal polynomials arising from the matrix model in 2D quantum gravity}
\author{Chao Min\thanks{School of Mathematical Sciences, Huaqiao University, Quanzhou 362021, China; Email: chaomin@hqu.edu.cn}\: and Yadan Ding\thanks{School of Mathematical Sciences, Huaqiao University, Quanzhou 362021, China}}

%\authorrunning{Short form of author list} % if too long for running head

\date{November 30, 2024}
% The correct dates will be entered by the editor
\maketitle
\begin{abstract}
We study the Hankel determinant and orthogonal polynomials with respect to the two-parameter weight function
$$
w(x)=w(x;t_1, t_2):=\exp(-x^6-t_2 x^4-t_1 x^2),\qquad x\in\mathbb{R},
$$
with $t_1,\; t_2 \in \mathbb{R}$. This problem arises from the matrix model in 2D quantum gravity investigated by Fokas, Its and Kitaev [Commun. Math. Phys. \textbf{142} (1991) 313--344].
By making use of the ladder operator approach, we find that the recurrence coefficient $\beta_{n}(t_1,t_2)$ for the monic orthogonal polynomials satisfies a nonlinear fourth-order difference equation, which is within the discrete Painlev\'{e} I hierarchy. We show that the orthogonal polynomials satisfy a second-order linear differential equation whose coefficients are all expressed in terms of $\beta_{n}(t_1,t_2)$. The relations between the logarithmic partial derivative of the Hankel determinant, the nontrivial leading coefficient of the monic orthogonal polynomials, and the recurrence coefficient are established. By using Dyson's Coulomb fluid approach, we obtain the large $n$ asymptotic expansions of the recurrence coefficient $\beta_{n}(t_1,t_2)$, the nontrivial leading coefficient $\mathrm{p}(n,t_1,t_2)$, the normalized constant $h_n(t_1,t_2)$ and the Hankel determinant $D_{n}(t_1,t_2)$.
\end{abstract}

$\mathbf{Keywords}$: Hankel determinants; Orthogonal polynomials; Discrete Painlev\'{e} I hierarchy;

Ladder operators; Coulomb fluid; Large $n$ asymptotics.

%Painlev\'{e} V; Coulomb fluid; Asymptotics.

$\mathbf{Mathematics\:\: Subject\:\: Classification\:\: 2020}$: 42C05, 33C45, 41A60.

\section{Introduction}
In the celebrated paper \cite{Fokas1991}, Fokas, Its and Kitaev presented an algorithmic method for deriving discrete analogues of Painlev\'{e} equations and for utilizing these equations to characterize similarity reductions of spatially discrete integrable evolution equations. It was shown in \cite{Fokas1991} that the algorithmic method's results are highly valuable for investigating the partition function of the matrix model in 2D quantum gravity associated with the measure $\exp(-t_3 z^6-t_2 z^4-t_1 z^2)$, which is equivalent to the following weight function by scaling:
\be\label{wf}
w(x)=w(x;t_1, t_2):=\exp(-x^6-t_2 x^4-t_1 x^2),\qquad x\in\mathbb{R},
\ee
with two parameters $t_1,\; t_2 \in \mathbb{R}$.

In this paper, we consider the Hankel determinant associated with the weight function (\ref{wf}), that is,
$$
D_{n}(t_1, t_2):=\det\big(\mu_{i+j}(t_1, t_2)\big)_{i,j=0}^{n-1},
$$
where $\mu_j(t_1, t_2)$ is the $j$th moment given by
$$
\mu_j(t_1, t_2):=\int_{-\infty}^{\infty}x^jw(x;t_1, t_2)dx,\qquad j=0,1,2,\dots.
$$

It is well-known that Hankel determinants are closely related to partition functions in random matrix theory. Indeed, let $Z_n(t_1,t_2)$ be the partition function for unitary random matrix ensemble associated with the weight function (\ref{wf}), i.e.,
\be\label{pf}
Z_n(t_1,t_2):=\int_{\mathbb{R}^n}\prod_{1\leq i<j\leq n}(x_i-x_j)^2\prod_{k=1}^n w(x_k;t_1, t_2)dx_k.
\ee
Then, there is only a constant difference between $D_{n}(t_1, t_2)$ and $Z_n(t_1,t_2)$ using Andr\'{e}ief's or Heine's integration formula \cite[(2.2.11)]{Szego}:
$$
D_n(t_1, t_2)=\frac{1}{n!}Z_n(t_1, t_2).
$$

It is also known that the Hankel determinant $D_n(t_1,t_2)$ can be expressed as the product of the normalized constants for the corresponding orthogonal polynomials (see, e.g., \cite[(2.1.6)]{Ismail}):
\be\label{hankel}
D_n(t_1,t_2)=\prod_{j=0}^{n-1}h_{j}(t_1,t_2),
\ee
where $h_{j}(t_1,t_2)>0$ is defined by
\be\label{pn}
h_{j}(t_1, t_2)\delta_{jk}=\int_{-\infty}^{\infty} P_{j}(x;t_1, t_2) P_{k}(x;t_1, t_2) w(x;t_1, t_2) d x,  \qquad j,k=0,1,2,\ldots.
\ee
Here $\delta_{jk}$ denotes the Kronecker delta, and $P_{j}(x;t_1, t_2)$ are the \textit{monic} polynomials of degree $j$, orthogonal with respect to the weight (\ref{wf}).

We mention that the orthogonal polynomials $P_{j}(x;t_1, t_2)$ are semi-classical orthogonal polynomials since the weight (\ref{wf}) satisfies the Pearson equation (see, e.g., \cite[Section 1.1.1]{VanAssche})
$$
\frac{d}{dx} \left(\s(x)w(x)\right)=\tau(x)w(x),
$$
with $\s(x)=x$ and $\tau(x)=1-2t_1 x^2-4t_2 x^4-6x^6$.

When $t_2=0$, the weight (\ref{wf}) is reduced to
\be\label{wf1}
w(x)=\exp(-x^6-t_1 x^2),\qquad x\in\mathbb{R},
\ee
with $t_1 \in \mathbb{R}$.
Orthogonal polynomials with respect to the weight (\ref{wf1}) have been considered in \cite{Magnus}, where a difference equation for the recurrence coefficient has been found. Some asymptotic properties for the Hankel determinant and orthogonal polynomials with the weight (\ref{wf1}) have been studied recently \cite{MWC}. Orthogonal polynomials with a slightly more general weight $w(x)=|x|^{\lambda}\exp(-x^6-t_1 x^2),\;x\in\mathbb{R},\;\lambda>-1,\;t_1 \in \mathbb{R}$ have been investigated in \cite{Clarkson2,WZC}.
%Hnakel determinnant and orthogonal polynomials for sextic Freud weight functions have been studied

Note that the weight (\ref{wf}) is an even function, which leads to the fact that
$
P_{n}(-x;t_1,t_2)=(-1)^nP_{n}(x;t_1,t_2)
$
and $P_{n}(x;t_1,t_2)$ has the monomial expansion \cite[p. 21]{Chihara1978}
\be\label{pn1}
P_{n}(x;t_1,t_2)=x^n+\mathrm{p}(n,t_1,t_2)x^{n-2}+\cdots,  \qquad n=0,1,2,\ldots,
\ee
where $\mathrm{p}(n,t_1,t_2)$ is the nontrivial leading coefficient of $P_{n}(x;t_1,t_2)$, and $\mathrm{p}(0,t_1,t_2)$=$\mathrm{p}(1,t_1,t_2)$=0. The three-term recurrence relation for the orthogonal polynomials reads
\be\label{sxdt}
xP_{n}(x;t_1,t_2)=P_{n+1}(x;t_1,t_2)+\beta_{n}(t_1,t_2)P_{n-1}(x;t_1,t_2),
\ee
supplemented by the initial conditions $P_{0}(x;t_1,t_2)=1,\; \beta_{0}(t_1,t_2)P_{-1}(x;t_1,t_2)=0$.

From (\ref{pn1}) and (\ref{sxdt}) we have
\be\label{beta1}
\beta_{n}(t_1,t_2)=\mathrm{p}(n,t_1,t_2)-\mathrm{p}(n+1,t_1,t_2),
\ee
and $\beta_{0}(t_1,t_2)=0$.
Taking a telescopic sum of (\ref{beta1}), we get an important identity
\be\label{beta3}
\sum_{j=0}^{n-1}\beta_{j}(t_1,t_2)=-\mathrm{p}(n,t_1,t_2).
\ee
We also have from (\ref{pn}) and (\ref{sxdt}) that
$$
\beta_{n}(t_1,t_2)=\frac{h_{n}(t_1,t_2)}{h_{n-1}(t_1,t_2)}.
$$

This paper is organized as follows. In Section 2, by applying the ladder operators and compatibility conditions to orthogonal polynomials with respect to the weight (\ref{wf}), we find that the recurrence coefficient $\beta_{n}(t_1,t_2)$ satisfies the second member of the discrete Painlev\'{e} I hierarchy. We also obtain the second-order differential equation satisfied by the orthogonal polynomials, and establish the relation between the nontrivial leading coefficient $\mathrm{p}(n,t_1,t_2)$ and the recurrence coefficient $\beta_{n}(t_1,t_2)$. In Section 3, we use Dyson's Coulomb fluid approach to study the large $n$ asymptotics of many quantities, including the recurrence coefficient $\beta_{n}(t_1,t_2)$, the nontrivial leading coefficient $\mathrm{p}(n,t_1,t_2)$, the normalized constant $h_n(t_1,t_2)$ and the Hankel determinant $D_{n}(t_1,t_2)$. Finally, we present some concluding remarks in Section 4.

\section{Ladder operators and compatibility conditions}
The ladder operator approach developed by Chen and Ismail \cite{Chen1997,ChenIsmail2005} is a highly useful and potent tool for analyzing the recurrence coefficients of orthogonal polynomials and the associated Hankel determinants; see e.g. \cite{ChenIts,Dai2010,Min2020}. Following \cite{Chen1997,ChenIsmail2005}, the orthogonal polynomials associated with the weight (\ref{wf}) satisfy the lowering and raising operators
\be\label{lo}
\left(\frac{d}{dx}+B_{n}(x)\right)P_{n}(x)=\beta_{n}A_{n}(x)P_{n-1}(x),
\ee
\be\label{ro}
\left(\frac{d}{dx}-B_{n}(x)-\mathrm{v}'(x)\right)P_{n-1}(x)=-A_{n-1}(x)P_{n}(x),
\ee
where $\mathrm{v}(x):=-\ln w(x)$ is the potential and
%differentiable and $\mathrm{v}'(z)$ is Lipschitz continuous in the real line, and
\be\label{and}
A_{n}(x):=\frac{1}{h_{n}}\int_{-\infty}^{\infty}\frac{\mathrm{v}'(x)-\mathrm{v}'(y)}{x-y}P_{n}^{2}(y)w(y)dy,
\ee
\be\label{bnd}
B_{n}(x):=\frac{1}{h_{n-1}}\int_{-\infty}^{\infty}\frac{\mathrm{v}'(x)-\mathrm{v}'(y)}{x-y}P_{n}(y)P_{n-1}(y)w(y)dy.
\ee
Note that we have suppressed the $t_1, t_2$ dependence of many quantities such as $\bt_n$ and $h_n$ for convenience.

Using the definitions of the functions $A_n(x)$ and $B_n(x)$ and with the aid of the three-term recurrence relation, it can be shown that
$A_n(x)$ and $B_n(x)$ satisfy the following compatibility conditions \cite{Chen1997,ChenIsmail2005}:
%identities for all $z\in\mathbb{C}\cup\{\infty\}$:
\be
B_{n+1}(x)+B_{n}(x)=x A_{n}(x)-\mathrm{v}'(x), \tag{$S_{1}$}
\ee
\be
1+x(B_{n+1}(x)-B_{n}(x))=\beta_{n+1}A_{n+1}(x)-\beta_{n}A_{n-1}(x). \tag{$S_{2}$}
\ee
It turns out that there is an equation which gives a better insight into the recurrence coefficient if ($S_{1}$) and ($S_{2}$) are suitably combined \cite{ChenIts}:
\be
B_{n}^{2}(x)+\mathrm{v}'(x)B_{n}(x)+\sum_{j=0}^{n-1}A_{j}(x)=\beta_{n}A_{n}(x)A_{n-1}(x). \tag{$S_{2}'$}
\ee

Recall that our weight function reads
$$
w(x)=\exp(-x^6-t_2 x^4-t_1 x^2),\qquad x\in\mathbb{R},
$$
with two parameters $t_1,\; t_2 \in \mathbb{R}$. We have
\be\label{vx}
\mathrm{v}(x)=-\ln w(x)=x^6+t_2 x^4+t_1 x^2
\ee
and
\be\label{v1}
\frac{\mathrm{v}'(x)-\mathrm{v}'(y)}{x-y}=6(x^4+x^3 y+x^2 y^2+x y^3+y^4)+4t_2(x^2+xy+y^2)+2t_1.
\ee
Substituting (\ref{v1}) into the definitions of $A_n (x)$ and $B_n (x)$ in (\ref{and}) and (\ref{bnd}), using the parity of the integrands and the three-term recurrence relation, we obtain
\be\label{an}
A_{n}(x)=6x^4+2x^2(2t_2+3\beta_{n}+3\beta_{n+1})+2t_1+4t_2 (\beta_{n}+\beta_{n+1})+6R_{n},
\ee
\be\label{bn}
B_{n}(x)=6 x^3\beta_{n}+2x(2t_2\beta_{n}+3r_{n}),
\ee
where $R_n$ and $r_n$ are the auxiliary quantities defined by
\be\label{Rnt}
R_n:=\frac{1}{h_n}\int_{-\infty}^{\infty}y^4 P_n^2(y)w(y)dy,
\ee
$$
r_n:=\frac{1}{h_{n-1}}\int_{-\infty}^{\infty}y^3 P_{n}(y)P_{n-1}(y)w(y)dy.
$$
%Note that, for the sake of brevity, we often refrain from explicitly showing the $t_1,t_2$-dependence of the recurrence coefficient $\bt_n$ and the quantity $h_n$.

Substituting the expressions of $A_{n}(x)$ and $B_{n}(x)$ in (\ref{an}) and (\ref{bn}) into ($S_1$), we have
\be\label{gs1}
R_n=r_n +r_{n+1}.
\ee
Substituting (\ref{an}) and (\ref{bn}) into ($S_{2}'$) and equating powers of $x$ on both sides, we obtain
\begin{align}
&x^6:\qquad r_{n}=\beta_{n}(\beta_{n-1}+\beta_{n}+\beta_{n+1}),\label{gs4}\\[8pt]
&x^4:\qquad n-2t_1 \beta_{n}+4t_2r_n+12\beta_{n} r_n=6\beta_{n}(R_n+R_{n-1})+8t_2 \beta_{n}(\beta_{n-1}+\beta_{n}+\beta_{n+1})\no\\
&\qquad \quad\:\:\: +6\beta_{n}(\beta_{n-1}+\beta_{n})(\beta_{n}+\beta_{n+1}),\label{gs5}\\[8pt]
&x^2:\qquad 2nt_2+6t_1 r_n+18r_n^2+24t_2\beta_{n}r_n+3\sum_{j=0}^{n-1}(\beta_{j}+\beta_{j+1})=4t_1 t_2\beta_{n}+6t_1\beta_{n}^2\no\\
&\qquad \quad\:\:\: +12t_2 \beta_{n}(R_n+R_{n-1})+18\beta_{n}(\beta_{n-1}+\beta_{n})R_{n}+18\beta_{n}(\beta_{n}+\beta_{n+1})R_{n-1}\no\\
&\qquad \quad\:\:\: +2(3t_1+4t_2^2)\beta_{n}(\beta_{n-1}+\beta_{n}+\beta_{n+1})+24t_2 \beta_{n}(\beta_{n-1}+\beta_{n})(\beta_{n}+\beta_{n+1}),\label{gs6}\\[8pt]
&x^0:\qquad nt_1 +3\sum_{j=0}^{n-1}R_{j}+2t_2 \sum_{j=0}^{n-1}(\beta_{j}+\beta_{j+1})=2\bt_n\big[t_1+3R_{n-1}+2t_2(\bt_{n-1}+\bt_n)\big]\no\\
&\qquad \quad\:\:\: \times\big[t_1+3R_{n}+2t_2(\bt_{n}+\bt_{n+1})\big]\label{gs7}.
\end{align}
The identities (\ref{gs1})--(\ref{gs7}) are very important and all will be used in the following analysis.
\begin{theorem}\label{th1}
The recurrence coefficient $\bt_n$ satisfies the fourth-order difference equation
\begin{align}\label{bcf}
&\:6\bt_n(\bt_{n-2}\bt_{n-1}+\bt_{n-1}^2+2\bt_{n-1}\bt_{n}+\bt_{n-1}\bt_{n+1}+\bt_{n}^2+2\bt_{n}\bt_{n+1}+\bt_{n+1}^2+\bt_{n+1}\bt_{n+2})\no\\
&+4t_2\bt_{n}(\bt_{n-1}+\bt_{n}+\bt_{n+1})+2t_1 \bt_{n}=n,
\end{align}
which is the second member of the discrete Painlev\'{e} I hierarchy \cite[(2.9)]{Cresswell1999} with parameters
$c_0=0,\; c_1=0,\; c_2=-2t_1,\; c_3=-4t_2,\; c_4=-6$.
\end{theorem}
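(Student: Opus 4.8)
The plan is to build the fourth-order difference equation (\ref{bcf}) out of the single relation (\ref{gs5}), since that is the only one of the four coefficient equations (\ref{gs4})--(\ref{gs7}) in which the index $n$ appears with unit coefficient and unencumbered by the summation terms that burden (\ref{gs6}) and (\ref{gs7}). Solving (\ref{gs5}) for $n$ gives
\begin{align*}
n={}&2t_1\bt_n-4t_2 r_n-12\bt_n r_n+6\bt_n(R_n+R_{n-1})\\
&+8t_2\bt_n(\bt_{n-1}+\bt_n+\bt_{n+1})+6\bt_n(\bt_{n-1}+\bt_n)(\bt_n+\bt_{n+1}),
\end{align*}
so the whole task reduces to eliminating the auxiliary quantities $r_n$, $R_n$ and $R_{n-1}$ in favour of the recurrence coefficients alone.

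For this I would invoke (\ref{gs4}) and (\ref{gs1}). Equation (\ref{gs4}) gives $r_n=\bt_n(\bt_{n-1}+\bt_n+\bt_{n+1})$ outright, and feeding this into the relations $R_n=r_n+r_{n+1}$ and $R_{n-1}=r_{n-1}+r_n$ supplied by (\ref{gs1}) renders both $R_n$ and $R_{n-1}$ explicit cubics in the $\bt$'s. It is precisely here that the index window widens to five consecutive coefficients: $R_{n-1}$ drags in $\bt_{n-2}$ through $r_{n-1}$ and $R_n$ drags in $\bt_{n+2}$ through $r_{n+1}$, which is what makes the equation genuinely fourth-order and produces the boundary monomials $\bt_{n-2}\bt_{n-1}$ and $\bt_{n+1}\bt_{n+2}$ in (\ref{bcf}).

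The final step is to sort the right-hand side by the parameter each term carries. The $t_1$ contribution is already the bare $2t_1\bt_n$. The two $t_2$ contributions combine cleanly, since after (\ref{gs4}) one has $-4t_2 r_n+8t_2\bt_n(\bt_{n-1}+\bt_n+\bt_{n+1})=4t_2\bt_n(\bt_{n-1}+\bt_n+\bt_{n+1})$, which is exactly the middle term of (\ref{bcf}). The surviving pieces $-12\bt_n r_n+6\bt_n(R_n+R_{n-1})+6\bt_n(\bt_{n-1}+\bt_n)(\bt_n+\bt_{n+1})$ are all cubic in the $\bt$'s; factoring out $6\bt_n$ and collecting the nine distinct monomials should collapse them to the symmetric bracket of (\ref{bcf}). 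The main obstacle is nothing conceptual but just this bookkeeping: tracking the several copies of $\bt_{n-1}\bt_n$ and $\bt_n\bt_{n+1}$ that arise from $-2r_n$, from $R_n$, from $R_{n-1}$ and from the product $(\bt_{n-1}+\bt_n)(\bt_n+\bt_{n+1})$, and checking that they assemble with the advertised coefficients $2$ and $1$.

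Once (\ref{bcf}) is established, identifying it as the second member of the discrete Painlev\'{e} I hierarchy is a matter of matching it against the template equation (2.9) of \cite{Cresswell1999} and reading off the constants. Comparing the cubic, the $t_2$-linear and the $t_1$-linear blocks against that template yields $c_0=c_1=0$, $c_2=-2t_1$, $c_3=-4t_2$ and $c_4=-6$, with the vanishing of $c_0$ and $c_1$ reflecting the purely even sextic form of the potential $\mathrm{v}(x)$.
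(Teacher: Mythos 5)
Your proposal is correct and follows essentially the same route as the paper: the paper likewise substitutes (\ref{gs4}) and the identity $R_n=r_n+r_{n+1}$ from (\ref{gs1}) (written there as (\ref{Rn1})) into (\ref{gs5}) to eliminate the auxiliary quantities, and your bookkeeping of the cubic and $t_2$-linear terms checks out exactly. The only cosmetic difference is that the paper records the intermediate expression for $R_n$ in terms of the $\bt$'s as a displayed equation before substituting.
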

\begin{proof}
From (\ref{gs1}) and (\ref{gs4}) we have
\be\label{Rn1}
R_n=\beta_{n-1}\beta_{n}+(\bt_n+\bt_{n+1})^2+\bt_{n+1}\bt_{n+2}.
\ee
Substituting (\ref{gs4}) and (\ref{Rn1}) into (\ref{gs5}), we obtain (\ref{bcf}).
\end{proof}
\begin{remark}
The relation of this problem to the discrete Painlev\'{e} equations was first observed in \cite{Fokas1991}. The fourth-order equation in the discrete Painlev\'{e} I hierarchy \cite[(2.9)]{Cresswell1999} has also appeared in other (one-parameter) sextic Freud weight problems \cite{Clarkson2,DM,Magnus,WZC}.
%In the next subsection, we will see that the above fourth-order difference equation play a pivotal role in deriving the relation equations between the recurrence coefficient $\bt_n$, the sub-leading coefficient $\mathrm{p}(n,t_1,t_2)$, and the auxiliary variable $R_n$.
\end{remark}
In the next theorem, we show that our orthogonal polynomials satisfy a second-order ordinary differential equation, with the coefficients all expressed in terms of the recurrence coefficient $\bt_n$.
\begin{theorem}
The orthogonal polynomials $P_n(x)$ satisfy the second-order differential equation
\begin{align}\label{ode}
&P_n''(x)-\bigg(\mathrm{v}'(x)+\frac{A_n'(x)}{A_n(x)}\bigg)P_n'(x)+\bigg(B_n'(x)-B_n^2(x)-\mathrm{v}'(x)B_n(x)\no\\[8pt]
&+\beta_nA_n(x)A_{n-1}(x)-\frac{A_n'(x)B_n(x)}{A_n(x)}\bigg)P_n(x)=0,
\end{align}
where $\mathrm{v}(x)$ is given by (\ref{vx}), and $A_{n}(x)$ and $B_n(x)$ are expressed in terms of $\bt_n$ as follows:
\begin{align}\label{an1}
A_{n}(x)=&\:6x^4+2x^2(2t_2+3\beta_{n}+3\beta_{n+1})+2t_1+4t_2 (\beta_{n}+\beta_{n+1})\no\\
&+6\left[\beta_{n-1}\beta_{n}+(\bt_n+\bt_{n+1})^2+\bt_{n+1}\bt_{n+2}\right],
\end{align}
\be\label{bn1}
B_{n}(x)=6 x^3\beta_{n}+2x\beta_{n}(2t_2+3\beta_{n-1}+3\beta_{n}+3\beta_{n+1}).
\ee
%Note that $\bt_n$ is a function of $t_1, t_2$, independent of $x$.
\end{theorem}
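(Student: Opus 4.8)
The plan is to obtain the differential equation (\ref{ode}) by the standard elimination of $P_{n-1}(x)$ between the two ladder relations (\ref{lo}) and (\ref{ro}), and then to reduce the coefficients $A_n(x)$ and $B_n(x)$ to expressions in $\beta_n$ alone by inserting the structural identities derived earlier. For the latter reduction, observe that (\ref{an1}) and (\ref{bn1}) follow immediately: substituting the formula (\ref{Rn1}) for $R_n$ into the raw expression (\ref{an}) for $A_n(x)$ gives (\ref{an1}), while substituting $r_n=\beta_n(\beta_{n-1}+\beta_n+\beta_{n+1})$ from (\ref{gs4}) into (\ref{bn}) for $B_n(x)$ and factoring out $\beta_n$ yields (\ref{bn1}). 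These two steps are purely algebraic.

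The substantive part is the ODE. First I would isolate $P_{n-1}(x)$ from the lowering operator (\ref{lo}), writing $\beta_n A_n(x) P_{n-1}(x) = P_n'(x) + B_n(x) P_n(x)$. Differentiating this identity in $x$ produces the term $\beta_n A_n'(x) P_{n-1}(x)$ together with $\beta_n A_n(x) P_{n-1}'(x)$. The derivative $P_{n-1}'(x)$ is exactly what the raising operator (\ref{ro}) supplies, so I would substitute $P_{n-1}'(x) = \bigl(B_n(x)+\mathrm{v}'(x)\bigr)P_{n-1}(x) - A_{n-1}(x)P_n(x)$. At this stage every remaining occurrence of $P_{n-1}(x)$ can be removed by using the lowering relation once more to replace $\beta_n A_n(x)P_{n-1}(x)$ by $P_n'(x)+B_n(x)P_n(x)$; this is the step that introduces the rational factor $A_n'(x)/A_n(x)$, since the stray contribution $\beta_n A_n'(x)P_{n-1}(x)$ must be rewritten as $\frac{A_n'(x)}{A_n(x)}\bigl(P_n'(x)+B_n(x)P_n(x)\bigr)$.

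Collecting the coefficients of $P_n''(x)$, $P_n'(x)$ and $P_n(x)$ then yields (\ref{ode}) directly. The $P_n'(x)$ coefficient collapses to $-\bigl(\mathrm{v}'(x)+A_n'(x)/A_n(x)\bigr)$ once the $\pm B_n(x)$ contributions cancel, and the $P_n(x)$ coefficient assembles into $B_n'(x)-B_n^2(x)-\mathrm{v}'(x)B_n(x)+\beta_n A_n(x)A_{n-1}(x)-A_n'(x)B_n(x)/A_n(x)$, matching (\ref{ode}) term by term.

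The computation is essentially bookkeeping, so the only genuine subtlety I anticipate is the appearance of $A_n(x)$ in a denominator: the elimination is valid only away from the zeros of $A_n(x)$, so one should record that (\ref{ode}) holds as an identity of rational functions, or equivalently after clearing denominators by multiplying through by $A_n(x)$. Apart from tracking the $A_n'(x)/A_n(x)$ terms carefully through the cancellations, no conceptual obstacle arises, and the resulting coefficients depend on $n$ only through the recurrence coefficients $\beta_{n-1},\beta_n,\beta_{n+1},\beta_{n+2}$ entering $A_n(x)$, $A_{n-1}(x)$ and $B_n(x)$.
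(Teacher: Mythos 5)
Your proposal is correct and matches the paper's proof exactly: the paper also obtains (\ref{ode}) by eliminating $P_{n-1}(x)$ between (\ref{lo}) and (\ref{ro}) (differentiating the lowering relation, substituting the raising relation, and folding the stray $\beta_n A_n'(x)P_{n-1}(x)$ term back via $A_n'(x)/A_n(x)$), and it likewise gets (\ref{an1}), (\ref{bn1}) by inserting (\ref{Rn1}) and (\ref{gs4}) into (\ref{an}) and (\ref{bn}). Your remark about the zeros of $A_n(x)$ is a reasonable extra caution that the paper leaves implicit.
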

\begin{proof}
Eliminating $P_{n-1}(x)$ from the ladder operator equations (\ref{lo}) and (\ref{ro}), we obtain (\ref{ode}). The expressions of $A_n(x)$ and $B_n(x)$ in (\ref{an1}) and (\ref{bn1}) come from (\ref{an}) and (\ref{bn}), with the aid of (\ref{gs4}) and (\ref{Rn1}).
\end{proof}
At the end of this section, we present the relation between the nontrivial leading coefficient $\mathrm{p}(n,t_1,t_2)$ and the recurrence coefficient $\beta_n$, which will be used to derive the large $n$ asymptotics of $\mathrm{p}(n,t_1,t_2)$ in the next section.
\begin{theorem}\label{th2}
The nontrivial leading coefficient $\mathrm{p}(n,t_1,t_2)$ is expressed in terms of the recurrence coefficient $\beta_n$ as follows:
\be\label{pn11}
\mathrm{p}(n,t_1,t_2)=\frac{1}{2} \bt_{n}\big[1-n-4t_2\bt_{n-1}\bt_{n+1}-6\bt_{n-1}\bt_{n+1}(\bt_{n-2}+\bt_{n-1}+\bt_{n}+\bt_{n+1}+\bt_{n+2})\big].
\ee
\end{theorem}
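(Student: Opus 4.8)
The plan is to combine the telescoping identity (\ref{beta3}), which already gives $\mathrm{p}(n,t_1,t_2)=-\sum_{j=0}^{n-1}\beta_j$, with the sum rule (\ref{gs6}), the only one among the compatibility relations (\ref{gs5})--(\ref{gs7}) that contains the single summed quantity $\sum_{j=0}^{n-1}(\beta_j+\beta_{j+1})$ and no $\sum R_j$ term. First I would record the elementary reindexing identity $\sum_{j=0}^{n-1}(\beta_j+\beta_{j+1})=2\sum_{j=0}^{n-1}\beta_j+\beta_n$, valid because $\beta_0=0$; together with (\ref{beta3}) this yields $\sum_{j=0}^{n-1}(\beta_j+\beta_{j+1})=-2\,\mathrm{p}(n,t_1,t_2)+\beta_n$. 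This turns (\ref{gs6}) from a statement about a sum into an explicit formula for $\mathrm{p}(n,t_1,t_2)$.

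Next I would solve (\ref{gs6}) for $\mathrm{p}(n,t_1,t_2)$, moving the non-summed terms across, which gives
$$\mathrm{p}(n,t_1,t_2)=\frac{1}{6}\big(2nt_2+6t_1 r_n+18 r_n^2+24t_2\beta_n r_n+3\beta_n-S_n\big),$$
where $S_n$ denotes the right-hand side of (\ref{gs6}). I would then remove the auxiliary quantities by substituting $r_n=\beta_n(\beta_{n-1}+\beta_n+\beta_{n+1})$ from (\ref{gs4}) and the closed forms of $R_n$ and $R_{n-1}$ from (\ref{Rn1}). At this stage everything is expressed through $\beta_{n-2},\dots,\beta_{n+2}$, the parameters $t_1,t_2$, and $n$. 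Note already that the constant term $\tfrac16\cdot 3\beta_n=\tfrac12\beta_n$ reproduces the ``$1$'' inside the bracket of (\ref{pn11}).

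The crucial observation is that the target formula (\ref{pn11}) carries no $t_1$, whereas the expression just obtained does: collecting the pure $t_1$ terms and using $r_n=\beta_n(\beta_{n-1}+\beta_n+\beta_{n+1})$ shows that the contribution $6t_1 r_n$ cancels the $6t_1\beta_n(\beta_{n-1}+\beta_n+\beta_{n+1})$ coming from $S_n$, leaving only a single residual term $-t_1\beta_n^2$. This residual is eliminated with the fourth-order difference equation (\ref{bcf}) of Theorem \ref{th1}: solving it for $2t_1\beta_n$ and multiplying by $\tfrac12\beta_n$ trades $t_1\beta_n^2$ for the bare $n$ (producing the $-\tfrac12 n\beta_n$ in (\ref{pn11})) together with the quartic and $t_2$ pieces that must feed into the bracket. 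Collecting the remaining $t_2$ and quartic $\beta$-terms then produces $-2t_2\beta_n\beta_{n-1}\beta_{n+1}$ and $-3\beta_n\beta_{n-1}\beta_{n+1}(\beta_{n-2}+\beta_{n-1}+\beta_n+\beta_{n+1}+\beta_{n+2})$, giving (\ref{pn11}).

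I expect the main obstacle to be purely computational bookkeeping: once $R_n$, $R_{n-1}$ and $r_n$ are expanded, $S_n$ becomes a long polynomial in $\beta_{n-2},\dots,\beta_{n+2}$, and one must check that after invoking (\ref{bcf}) all the quartic $\beta$-products and the mixed $t_1,t_2$ contributions collapse exactly into the compact combination $\tfrac12\beta_n\big[1-n-4t_2\beta_{n-1}\beta_{n+1}-6\beta_{n-1}\beta_{n+1}(\beta_{n-2}+\beta_{n-1}+\beta_n+\beta_{n+1}+\beta_{n+2})\big]$. The only genuinely non-mechanical step is the elimination of $t_1$ via (\ref{bcf}); everything else is careful collection of terms.
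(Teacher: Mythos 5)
Your proposal is correct and takes essentially the same route as the paper's own proof: both substitute the identity $\sum_{j=0}^{n-1}(\beta_{j}+\beta_{j+1})=\beta_{n}-2\mathrm{p}(n,t_1,t_2)$ into (\ref{gs6}), eliminate $r_n$, $R_n$ and $R_{n-1}$ via (\ref{gs4}) and (\ref{Rn1}), and then use the difference equation (\ref{bcf}) to remove the $t_1$-dependence and produce the final compact form. Your intermediate claims (the residual pure-$t_1$ term $-t_1\beta_n^2$, the $-\tfrac12 n\beta_n$ arising from (\ref{bcf}), and the collapse of the quartic terms to $-3\beta_{n-1}\beta_n\beta_{n+1}(\beta_{n-2}+\beta_{n-1}+\beta_{n}+\beta_{n+1}+\beta_{n+2})$) all check out under direct expansion.
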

\begin{proof}
Taking account of (\ref{beta3}) we have
\be\label{beta4}
\sum_{j=0}^{n-1}(\bt_{j}+\bt_{j+1})=\bt_{n}-2\mathrm{p}(n,t_1,t_2).
\ee
Substituting (\ref{beta4}) into (\ref{gs6}) and eliminating $r_n$ and $R_{n}$ by (\ref{gs4}) and (\ref{Rn1}), we get the expression of $\mathrm{p}(n,t_1,t_2)$ in terms of $\bt_n$. By making use of (\ref{bcf}) to simplify the result, we arrive at (\ref{pn11}).
\end{proof}

\section{Large $n$ asymptotics}
Based on Dyson's Coulomb fluid approach \cite{Dyson}, for sufficiently large $n$, the eigenvalues of $n\times n$ Hermitian matrices of a unitary random matrix ensemble can be approximated as a continuous fluid with an equilibrium density $\s(x)$ supported on $J$ (a subset of $\mathbb{R}$). For the unitary ensemble whose partition function given by (\ref{pf}), the potential $\mathrm{v}(x)=x^6+t_2 x^4+t_1 x^2, \;x\in \mathbb{R}$ is even. In this case, when $x\mathrm{v}'(x)$ is positive and increasing on $\mathbb{R}^{+}$, i.e., $t_1\geq0,\; t_2\geq0$ (or $t_1 \geq \frac{4}{9}t_2^2, \; t_2 <0$), $J$ is a single symmetric interval, say, $(-b, b)$; see \cite[p. 203]{Saff}.
According to the work of Chen and Ismail \cite{ChenIsmail} (see also \cite{Saff}), the equilibrium density $\sigma(x)$ is determined by minimizing the free energy functional
$$
F[\s]:=\int_{-b}^{b}\s(x)\mathrm{v}(x)dx-\int_{-b}^{b}\int_{-b}^{b}\s(x)\ln|x-y|\s(y)dxdy,
$$
subject to the normalization condition
\be\label{con}
\int_{-b}^{b}\s(x)dx=n.
\ee
%Here $\mathrm{v}(x)=-\ln w(x)$ is the potential.
It is found that the density $\s(x)$ satisfies the integral equation
\be\label{ie}
\mathrm{v}(x)-2\int_{-b}^{b}\ln|x-y|\s(y)dy=A,\qquad x\in (-b, b),
\ee
where $A$ is the Lagrange multiplier for the constraint (\ref{con}). Note that $A$ is a constant independent of $x$ but it depends on $n$ and the parameters $t_1, t_2$.

Equation (\ref{ie}) is transformed into the following singular integral equation by taking a derivative with respect to $x$:
\be\label{sie}
\mathrm{v}'(x)-2P\int_{-b}^{b}\frac{\sigma(y)}{x-y}dy=0,\qquad x\in (-b, b),
\ee
where $P$ denotes the Cauchy principal value. Multiplying by $\frac{x}{\sqrt{b^2-x^2}}$ on both sides of (\ref{sie}) and integrating from $-b$ to $b$ with respect to $x$, we obtain
\be\label{sup2}
\int_{-b}^{b}\frac{x\:\mathrm{v}'(x)}{\sqrt{b^2-x^2}}dx=2\pi n,
\ee
where use has been made of (\ref{con}) and the important integral formula
$$
P\int_{-b}^{b}\frac{1}{(x-y)\sqrt{b^2-x^2}}dx=0,\qquad y\in (-b,b).
$$
The endpoint of the support of the equilibrium density, $b$, is determined by equation (\ref{sup2}).
Furthermore, it was shown in \cite{ChenIsmail} that as $n\rightarrow\infty$,
\begin{equation}\label{beta}
\beta_n=\frac{b^2}{4}\left(1+O\left(\frac{\partial^3 A}{\partial n^3}\right)\right).
\end{equation}

Substituting (\ref{vx}) into (\ref{sup2}), we obtain a sextic equation satisfied by $b$:
$$
b^2  \left(15b^4+12t_2b^2+8t_1\right)=16 n.
$$
Let $u=b^2$, then we get a cubic equation
$$
15 u^3+12t_2u^2+8t_1 u-16n=0,
$$
which has a unique real solution
\be\label{u}
u=\frac{2}{15}\left(\phi^{1/3}-2 t_2+\frac{4 t_2^2- 10 t_1}{ \phi^{1/3}}\right),
\ee
where
$$
\phi=225 n-8 t_2^3+30 t_1 t_2 +5 \sqrt{2025 n^2-36 n (4 t_2^3-15 t_1 t_2)+40 t_1^3-12 t_1^2 t_2^2}.
$$
It follows that as $n\rightarrow\infty$,
\begin{align}\label{mu11}
\frac{u}{4}=&\frac{n^{1/3}}{\kappa}-\frac{t_2}{15}-\frac{2(5t_1-2t_2^2)}{15\ka^2n^{1/3}}+\frac{2t_2(15t_1-4t_2^2)}{675\ka n^{2/3}}+ \frac{4 t_2(5t_1 -2t_2^2)(15t_1 -4t_2^2)}{10125\ka^2 n^{4/3}}\no\\[8pt]
&+\frac{2(375 t_1^3-900 t_1^2 t_2^2+420 t_1 t_2^4-56 t_2^6)}{455625\ka n^{5/3}} +O(n^{-7/3}),
\end{align}
where $\ka=\sqrt[3]{60}$, and this notation will be used throughout the following text.

From (\ref{ie}) and similarly as in \cite[Lemma 3]{Min2021}, we find the Lagrange multiplier $A$ is
$$
A=\frac{1}{\pi}\int_{-b}^{b}\frac{\mathrm{v}(x)}{\sqrt{b^2-x^2}}dx-2n\ln \frac{b}{2}=\frac{1}{16}u (5u^2+6t_2u+8t_1)-n\ln \frac{u}{4}.
$$
Substituting (\ref{u}) into the above and taking a large $n$ limit, we have
\begin{align}\label{aa}
A=&-\frac{1}{3}n\ln n+\left(\frac{1}{3}+\ln \ka\right)n+\frac{6t_2 n^{2/3}}{\ka^2}+\frac{2(5t_1-t_2^2)n^{1/3}}{5\ka}-\frac{2 t_2(5t_1-t_2^2)}{75}\no\\[8pt]
&-\frac{2(15t_1^2-12t_1 t_2^2+2t_2^4)}{45\ka^2 n^{1/3}}+\frac{2 t_2(375t_1^2-200t_1 t_2^2+28t_2^4)}{16875\ka n^{2/3}}+O(n^{-4/3}).
\end{align}

We are now ready to derive the large $n$ asymptotics of the recurrence coefficient $\bt_n(t_1, t_2)$ for fixed $t_1, t_2$.
\begin{theorem}\label{thm}
The recurrence coefficient $\bt_n$ has the following asymptotic expansion as $n\rightarrow\infty $:
\begin{align}\label{bnz}
\bt_n=&\frac{n^{1/3}}{\kappa}-\frac{t_2}{15}-\frac{2(5t_1-2t_2^2)}{15\ka^2n^{1/3}}+\frac{2t_2(15t_1-4t_2^2)}{675\ka n^{2/3}}+ \frac{4t_2(5t_1 -2t_2^2)(15t_1-4t_2^2)}{10125\ka^2 n^{4/3}}\no\\[8pt]
&+\frac{1500t_1^3-3600t_1^2 t_2^2 +1680t_1 t_2^4-224t_2^6+ 50625}{911250\ka n^{5/3}}-\frac{t_2}{270 n^{2}}+O(n^{-7/3}),
\end{align}
where $\ka=\sqrt[3]{60}$.
\end{theorem}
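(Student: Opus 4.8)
The plan is to combine the rigorous leading-order information from the Coulomb fluid with the \emph{exact} difference equation (\ref{bcf}) to pin down the subleading coefficients. First I would record what (\ref{beta}) actually delivers: equation (\ref{aa}) gives $\partial^3 A/\partial n^3=O(n^{-2})$, and since $u/4=O(n^{1/3})$, relation (\ref{beta}) becomes $\bt_n=\frac{u}{4}\big(1+O(n^{-2})\big)=\frac{u}{4}+O(n^{-5/3})$. Substituting (\ref{mu11}) then establishes (\ref{bnz}) through the $n^{-4/3}$ term and, crucially, shows that $\bt_n$ admits a genuine asymptotic expansion in descending powers of $n^{1/3}$. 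This is the essential role of the Coulomb fluid here: it certifies the Puiseux form of the expansion, selects the correct branch of the leading coefficient (positivity $\bt_n>0$ forces the real cube root, $a_0=\ka^{-1}$), and gives the first several coefficients rigorously.

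Second, to capture the $n^{-5/3}$ and $n^{-2}$ coefficients — which lie \emph{inside} the $O(n^{-5/3})$ error of the bare Coulomb-fluid estimate, and which the algebraic function $u/4$ cannot supply (its Puiseux expansion contains no $n^{-2}$ term at all) — I would feed the ansatz $\bt_n=\sum_{m\ge 0}a_m\,n^{(1-m)/3}$, with $a_0=\ka^{-1}$ from the first step, into the difference equation (\ref{bcf}). The five shifted copies $\bt_{n\pm1},\bt_{n\pm2}$ are expanded by $(n+j)^{(1-m)/3}=n^{(1-m)/3}\big(1+\tfrac{(1-m)j}{3n}+\cdots\big)$, all products in (\ref{bcf}) are collected, and coefficients of equal powers of $n$ are equated to those of the right-hand side. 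The order $n$ reproduces $60\,a_0^3=1$; thereafter each successive order yields a \emph{linear} equation for one new coefficient, whose multiplier (proportional to $a_0^2$) does not vanish, so the $a_m$ follow recursively. The coefficient of $n^{-4/3}$ in (\ref{bcf}) is the first to involve $a_7$, hence matching down through that order fixes $a_0,\dots,a_7$ with remainder $O(n^{-7/3})$. The agreement of $a_1,\dots,a_5$ with (\ref{mu11}) is a decisive internal check, while the new value $a_7=-\tfrac{t_2}{270}$ (and the shifted $n^{-5/3}$ coefficient) encodes precisely the $O(\partial^3 A/\partial n^3)$ correction of (\ref{beta}).

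The main obstacle is computational rather than conceptual: expanding five shifted multi-term Puiseux series to relative order $n^{-7/3}$ and collecting the cross terms in the cubic expression (\ref{bcf}) is heavy bookkeeping that realistically calls for a computer algebra system, and one must retain enough terms in each factor that the $n^{-2}$ coefficient is not corrupted by premature truncation. The one conceptual point to verify is the absence of a \emph{resonance} at the orders in play, i.e.\ that the multiplier of each newly introduced $a_m$ is nonzero; this is what guarantees the formal matching determines the expansion uniquely rather than leaving a free constant — the latter eventuality being exactly what the Coulomb-fluid input of the first step would otherwise have to supply.
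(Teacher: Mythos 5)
Your proposal is correct and takes essentially the same route as the paper: the Coulomb-fluid input (\ref{beta}), (\ref{mu11}), (\ref{aa}) is used to justify the Puiseux-type ansatz in powers of $n^{-1/3}$ and its leading behaviour, and the coefficients are then determined recursively by substituting that ansatz into the difference equation (\ref{bcf}) and equating powers of $n$. Your accounting of exactly which coefficients the Coulomb fluid certifies (through $n^{-4/3}$, the error being $O(n^{-5/3})$, so the $n^{-5/3}$ and $n^{-2}$ terms must come from (\ref{bcf}) alone) and your non-resonance remark are a more careful rendering of the same argument the paper gives.
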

\begin{proof}
From (\ref{beta}), (\ref{mu11}) and (\ref{aa}), we find that $\bt_n$ has the large $n$ expansion of the form
\be\label{bt1}
\bt_n=a_{-1}n^{1/3}+a_0+\sum_{j=1}^{\infty}\frac{a_j}{n^{j/3}},
\ee
where $a_j,\; j=-1, 0, 1, \ldots$ are the expansion coefficients to be determined.
Substituting (\ref{bt1}) into the difference equation (\ref{bcf}), taking a large $n$ limit and equating powers of $n$ on both sides, we obtain the expansion coefficients $a_j,\; j=-1, 0, 1, \ldots$ one by one. The first few terms are
\begin{align}
&a_{-1}=\frac{1}{\kappa},\qquad\qquad\qquad\qquad a_{0}=-\frac{t_2}{15}, \qquad\qquad\qquad\qquad a_1=-\frac{2(5t_1-2t_2^2)}{15\ka^2},\nonumber\\[8pt]
&a_2=\frac{2t_2(15t_1-4t_2^2)}{675\ka },\qquad\;\;\; a_3=0,\qquad\qquad\;\; a_4=\frac{4t_2(5t_1 -2t_2^2)(15t_1-4t_2^2)}{10125\ka^2},\nonumber\\[8pt]
&a_5=\frac{1500t_1^3-3600t_1^2 t_2^2 +1680t_1 t_2^4-224t_2^6+ 50625}{911250\ka},\qquad\qquad\qquad a_6=-\frac{t_2}{270}.\no
\end{align}
The theorem is then established.
\end{proof}
Based on the above theorem, the large $n$ asymptotics of the nontrivial leading coefficient $\mathrm{p}(n,t_1,t_2)$ and the Hankel determinant $D_{n}(t_1,t_2)$ will be derived from their relations to the recurrence coefficient.
\begin{theorem}
The nontrivial leading coefficient of the monic orthogonal polynomials, $\mathrm{p}(n,t_1,t_2)$, has the following large $n$ asymptotic expansion:
\begin{align}\label{pnt}
\mathrm{p}(n,t_1,t_2)=&-\frac{3 n^{4/3}}{4 \ka}+\frac{t_2 n }{15}+\frac{ (5 t_1-2 t_2^2)n^{2/3}}{5\ka^2}+\frac{ (16 t_2^3-60 t_1 t_2+225)n^{1/3}}{450\ka}\no\\[8pt]
&-\frac{25 t_1^2-30 t_1 t_2^2+75 t_2+6 t_2^4}{2250}+\frac{ (5 t_1-2 t_2^2) (60 t_1 t_2-16 t_2^3-225) }{ 3375 \ka^2n^{1/3}}\no\\[8pt]
&+\frac{750 t_1^3-1800 t_1^2 t_2^2+6750 t_1 t_2+840 t_1 t_2^4-1800 t_2^3-112 t_2^6+16875 }{303750\ka n^{2/3}}\no\\[8pt]
&-\frac{t_2}{270 n}+O(n^{-4/3}).
\end{align}
\end{theorem}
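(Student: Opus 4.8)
The plan is to start from the exact closed-form expression for the nontrivial leading coefficient established in Theorem~\ref{th2}, namely (\ref{pn11}),
\[
\mathrm{p}(n,t_1,t_2)=\tfrac12\bt_n\big[1-n-4t_2\bt_{n-1}\bt_{n+1}-6\bt_{n-1}\bt_{n+1}(\bt_{n-2}+\bt_{n-1}+\bt_n+\bt_{n+1}+\bt_{n+2})\big],
\]
and to feed into it the large $n$ asymptotic expansion of $\bt_n$ from Theorem~\ref{thm}, i.e.\ (\ref{bnz}). This turns the problem into a purely algebraic expansion in powers of $n^{1/3}$. I would deliberately avoid the alternative route through the telescoping identity (\ref{beta3}), $\mathrm{p}(n,t_1,t_2)=-\sum_{j=0}^{n-1}\bt_j$, which would require an Euler--Maclaurin summation over a range where the asymptotics fails for small $j$ and would leave an undetermined additive constant to be pinned down separately; the local closed form (\ref{pn11}) sidesteps this entirely.

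First I would expand each shifted coefficient $\bt_{n+k}$ with $k\in\{-2,-1,1,2\}$ by substituting $n\mapsto n+k$ in (\ref{bnz}) and Taylor-expanding every fractional power, $(n+k)^{\al}=n^{\al}(1+k/n)^{\al}=n^{\al}\big(1+\tfrac{\al k}{n}+\tfrac{\al(\al-1)k^2}{2n^2}+\cdots\big)$, keeping enough terms so that all five coefficients are known as series in $n^{1/3}$ to a common absolute accuracy. Next I would form the product $\bt_{n-1}\bt_{n+1}$ and the sum $S:=\bt_{n-2}+\bt_{n-1}+\bt_n+\bt_{n+1}+\bt_{n+2}$, then the triple product $\bt_{n-1}\bt_{n+1}\,S$, collecting powers of $n^{1/3}$ at each stage. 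Assembling the bracket and multiplying by $\tfrac12\bt_n$ yields (\ref{pnt}) after collecting like powers. A useful consistency check on the bookkeeping is the leading term: since $\ka^3=60$, one finds $\bt_{n-1}\bt_{n+1}S\sim 5n/\ka^3=n/12$, so the bracket behaves as $-n-\tfrac{n}{2}=-\tfrac{3n}{2}$ and $\tfrac12\cdot\tfrac{n^{1/3}}{\ka}\cdot(-\tfrac{3n}{2})=-\tfrac{3n^{4/3}}{4\ka}$, matching the first term of (\ref{pnt}).

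A necessary preliminary is to verify that the accuracy of (\ref{bnz}) suffices. Since $\bt_{n+k}=\bt_{n+k}^{\mathrm{approx}}+O(n^{-7/3})$ and each factor has size $n^{1/3}$, the triple product $\bt_{n-1}\bt_{n+1}S$ (of size $n$) inherits a relative error $O(n^{-8/3})$, hence an absolute error $O(n^{-5/3})$; multiplying by $\tfrac12\bt_n\sim n^{1/3}$ produces an error $O(n^{-4/3})$, exactly the order at which (\ref{pnt}) is truncated. Thus the seven explicit terms of (\ref{bnz}), down through $-t_2/(270n^2)$, are precisely what is needed, and no more.

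The main obstacle I anticipate is not conceptual but the delicate cancellation bookkeeping. The $O(n)$ pieces of $-n$ and $-6\bt_{n-1}\bt_{n+1}S$ must combine to $-\tfrac{3n}{2}$, and the genuinely nontrivial constant-type contributions in (\ref{pnt})---the $225$ inside $16t_2^3-60t_1t_2+225$, and the terms $75t_2$, $6750t_1t_2$, $-1800t_2^3$, $16875$ and $-t_2/(270n)$---arise only through the interplay of the subleading and anomalous coefficients $a_3=0$, $a_5$ (which carries the parameter-independent $50625=225^2$) and $a_6=-t_2/270$ with the Taylor shift corrections. Tracking these correctly, where a single misplaced factor of $k$ or $\al(\al-1)$ corrupts every lower-order term, is where care is essential; symbolic computation is the natural safeguard.
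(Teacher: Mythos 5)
Your proposal is correct and is exactly the paper's own argument: substitute the expansion (\ref{bnz}) into the closed-form relation (\ref{pn11}) from Theorem \ref{th2} and expand in powers of $n^{1/3}$. The paper states this in a single sentence; your added error bookkeeping (showing the $O(n^{-7/3})$ accuracy of (\ref{bnz}) is exactly what is needed for the $O(n^{-4/3})$ truncation of (\ref{pnt})) and the leading-order consistency check are sound refinements of the same route.
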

\begin{proof}
Substituting (\ref{bnz}) into the expression of $\mathrm{p}(n,t_1,t_2)$ in (\ref{pn11}) and taking a large $n$ limit, we obtain the desired result.
\end{proof}
\begin{remark}
For consistency check, substituting (\ref{pnt}) into (\ref{beta1}) and taking a large $n$ limit, we obtain the large $n$ asymptotic expansion of $\bt_n$, which agrees precisely with the result in Theorem \ref{thm}.
\end{remark}
\begin{lemma}
The quantity $\frac{\partial}{\partial t_2} \ln {D}_n(t_1, t_2)$ is expressed in terms of the recurrence coefficient $\bt_n$ as follows:
\begin{align}\label{rb}
\frac{\partial}{\partial t_2} \ln {D}_n(t_1, t_2)=&\:2t_1 \bt_{n-1}\bt_{n}\bt_{n+1}-2t_1 \bt_{n}^{2}(\bt_{n-1}+\bt_{n}+\bt_{n+1})-4t_2 \bt_{n}^{2}(\bt_{n-1}+\bt_{n}+\bt_{n+1})^{2}\no\\
&-6\bt_{n}\big[\bt_{n-1}(\bt_{n-2}+\bt_{n-1}+\bt_{n})+\bt_{n}(\bt_{n-1}+\bt_{n}+\bt_{n+1})\big]\no\\
&\times\big[\bt_{n}(\bt_{n-1}+\bt_{n}+\bt_{n+1})+ \bt_{n+1}(\bt_{n}+\bt_{n+1}+\bt_{n+2})\big].
\end{align}
\end{lemma}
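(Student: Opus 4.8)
The plan is to reduce the $t_2$-derivative of $\ln D_n$ to a sum of the auxiliary quantities $R_j$, and then to collapse that sum into a closed expression in $\bt_n$ by means of the identity (\ref{gs7}).

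First I would differentiate the normalization $h_n$ in (\ref{pn}) with respect to $t_2$. Since $\partial w/\partial t_2=-x^4 w$, and $\partial P_n/\partial t_2$ has degree at most $n-2$ (the monic leading coefficient being independent of $t_2$), orthogonality annihilates the cross-term, leaving
\[
\frac{\partial h_n}{\partial t_2}=-\int_{-\infty}^{\infty}x^4 P_n^2(x)w(x)\,dx=-h_n R_n,
\]
by the definition (\ref{Rnt}) of $R_n$. Hence $\frac{\partial}{\partial t_2}\ln h_n=-R_n$, and since $D_n=\prod_{j=0}^{n-1}h_j$ by (\ref{hankel}),
\[
\frac{\partial}{\partial t_2}\ln D_n=-\sum_{j=0}^{n-1}R_j.
\]

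Next I would eliminate the sum. Identity (\ref{gs7}) expresses $3\sum_{j=0}^{n-1}R_j$ as the product $2\bt_n[t_1+3R_{n-1}+2t_2(\bt_{n-1}+\bt_n)][t_1+3R_n+2t_2(\bt_n+\bt_{n+1})]$ minus $nt_1$ and minus $2t_2\sum_{j=0}^{n-1}(\bt_j+\bt_{j+1})$. The residual $\bt$-sum is removed by (\ref{beta4}), namely $\sum_{j=0}^{n-1}(\bt_j+\bt_{j+1})=\bt_n-2\mathrm{p}(n,t_1,t_2)$, and then the leading coefficient $\mathrm{p}(n,t_1,t_2)$ is replaced by its closed form (\ref{pn11}). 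Substituting $R_{n-1}$ and $R_n$ from (\ref{Rn1}) turns the entire right-hand side into a polynomial in $\bt_{n-2},\dots,\bt_{n+2}$.

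The top-degree part comes out immediately: after multiplication by the overall factor $-\tfrac13$ from $-\sum_{j}R_j$, the $9R_{n-1}R_n$ term of the product yields $-6\bt_n R_{n-1}R_n$, and since $R_{n-1}=r_{n-1}+r_n$ and $R_n=r_n+r_{n+1}$ by (\ref{gs1}), with $r_j=\bt_j(\bt_{j-1}+\bt_j+\bt_{j+1})$ from (\ref{gs4}), this is precisely the last two lines of (\ref{rb}). The main obstacle is the lower-order bookkeeping. Substituting (\ref{pn11}) and $-nt_1$ produces explicit terms proportional to $n$, of the form $n(t_1+2t_2\bt_n)$, which no combination of $\bt$-monomials can cancel on its own; these must be removed by invoking the difference equation (\ref{bcf}), which expresses $n$ as a cubic in the $\bt$'s. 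After that replacement, expanding the cross-terms of the product and collecting everything should leave exactly the first line of (\ref{rb}). Checking that all spurious $n$-dependence cancels and that the constants combine correctly is the delicate—though entirely mechanical—part of the computation.
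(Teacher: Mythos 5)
Your proposal is correct, and its first half coincides with the paper's own proof: differentiating $h_n$ under the integral sign gives $\frac{\partial}{\partial t_2}\ln h_n=-R_n$, hence $\frac{\partial}{\partial t_2}\ln D_n=-\sum_{j=0}^{n-1}R_j$, which is exactly the paper's (\ref{pd}). Where you diverge is in eliminating the residual sum $\sum_{j=0}^{n-1}(\beta_j+\beta_{j+1})$ from (\ref{gs7}): the paper removes it by combining (\ref{gs7}) directly with the compatibility identity (\ref{gs6}), never passing through the leading coefficient, whereas you remove it via (\ref{beta4}) together with the closed form (\ref{pn11}) of Theorem \ref{th2}. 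Since (\ref{pn11}) was itself derived from (\ref{gs6}), (\ref{beta4}) and (\ref{bcf}), both routes rest on the same underlying identities; yours simply reuses the already-proven Theorem \ref{th2} rather than redoing the elimination, which is logically sound (Theorem \ref{th2} precedes the Lemma and does not depend on it) and somewhat more economical, at the price of dragging the bulkier expression (\ref{pn11}) through the algebra. Your two mechanical claims also check out: writing $\Sigma_\beta=\beta_n\bigl[n+4t_2\beta_{n-1}\beta_{n+1}+6\beta_{n-1}\beta_{n+1}(\beta_{n-2}+\beta_{n-1}+\beta_n+\beta_{n+1}+\beta_{n+2})\bigr]$ for the sum, the explicit $n$-dependence of $3\,\frac{\partial}{\partial t_2}\ln D_n$ enters as $n(t_1+2t_2\beta_n)$ and is removed by substituting (\ref{bcf}); after that, the $t_1^2$, $t_1R_j$, $t_1t_2$ and $t_2^2$ groups cancel identically, the $t_2$-linear terms collapse to $-12t_2r_n^2$, and together with $-18\beta_nR_{n-1}R_n$ and $6t_1\beta_n(\beta_{n-1}\beta_{n+1}-r_n)$ this reproduces, after division by $3$ and use of (\ref{gs1}), (\ref{gs4}) and (\ref{Rn1}), precisely the right-hand side of (\ref{rb}).
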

\begin{proof}
From (\ref{pn}) we have
$$
h_{n}(t_1,t_2)=\int_{-\infty}^{\infty}P_{n}^{2}(x;t_1,t_2)w(x;t_1,t_2)dx.
$$
Taking a partial derivative with respect to $t_2$ gives
$$
\frac{\partial}{\partial t_2} \ln h_{n}(t_1,t_2)=-R_n,
$$
where $R_n$ is given by (\ref{Rnt}). It follows from (\ref{hankel}) that
\be\label{pd}
\frac{\partial}{\partial t_2} \ln {D}_n(t_1, t_2)=- \sum_{j=0}^{n-1}R_j.
\ee
Eliminating $\sum_{j=0}^{n-1}(\beta_{j}+\beta_{j+1})$ from the combination of (\ref{gs6}) and (\ref{gs7}) and making use of (\ref{gs4}) and (\ref{Rn1}), we can express $\sum_{j=0}^{n-1}R_j$ in terms of the recurrence coefficient $\bt_n$. Using (\ref{bcf}) to simplify the result, we arrive at (\ref{rb}) from (\ref{pd}).
\end{proof}
\begin{theorem}
The Hankel determinant $D_n(t_1, t_2)$ has the large $n$ asymptotic expansion
\begin{align}\label{dnta}
\ln D_n(t_1, t_2)=&\:\frac{1}{6} n^2 \ln n-\frac{1}{4}(1+2\ln \ka)n^2-\frac{18t_2 n^{5/3}}{5 \ka^2}-\frac{3(5t_1-t_2^2) n^{4/3}}{10 \ka}+\left[\frac{2t_2(5t_1-t_2^2)}{75}+\ln (2\pi)\right]n\no\\[8pt]
&+\frac{(15t_1^2-12t_1t_2^2+2t_2^4)n^{2/3}}{15 \ka^2}-\frac{2t_2 (375t_1^2-200t_1t_2^2+28t_2^4)n^{1/3}}{5625 \ka}-\frac{\ln n}{12}\no\\[8pt]
&+C_0+\frac{C_1}{n^{1/3}}+\frac{C_2}{n^{2/3}}+\frac{C_3}{n}+O(n^{-4/3}),
\end{align}
where the constant term $C_0$ and the coefficients of the first few higher order terms are
\begin{align}
&C_0=\frac{t_2^2 (125 t_1^2 -50 t_1 t_2^2+ 6 t_2^4)}{9375}-\frac{t_1^3}{135}-\frac{\ln 3}{12}+\zeta'(-1),\no\\[8pt]
&C_1=\frac{ t_2 (21000 t_1^3-21000 t_1^2 t_2^2+6720 t_1 t_2^4-704 t_2^6+118125)}{354375 \ka^2},\no\\[8pt]
&C_2=\frac{1875 t_1^4-6000 t_1^3 t_2^2+4200 t_1^2 t_2^4+168750 t_1-1120 t_1 t_2^6-50625 t_2^2+ 104 t_2^8}{1518750\ka },\no\\[8pt]
&C_3=-\frac{t_2 (15t_1 -4t_2^2)}{2025},\no
\end{align}
and $\ka=\sqrt[3]{60},\; \zeta'(\cdot)$ is the derivative of the Riemann zeta function.
\end{theorem}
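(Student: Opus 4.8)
The plan is to obtain $\ln D_n$ by integrating, with respect to $t_2$, the closed-form expression for $\frac{\partial}{\partial t_2}\ln D_n(t_1,t_2)$ supplied by the preceding lemma, feeding in the recurrence-coefficient expansion of Theorem~\ref{thm}, and then to fix the $t_2$-independent constant of integration from the already-known one-parameter ($t_2=0$) problem. Concretely, I would start from the identity (\ref{rb}) for $\frac{\partial}{\partial t_2}\ln D_n$ and substitute the asymptotics (\ref{bnz}) for $\bt_n$.

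First I would insert (\ref{bnz}) into (\ref{rb}). Since (\ref{rb}) involves the shifted coefficients $\bt_{n-2},\bt_{n-1},\bt_{n+1},\bt_{n+2}$ as well as $\bt_n$, I would obtain each of them by replacing $n$ with $n\pm1$ and $n\pm2$ in (\ref{bnz}) and re-expanding the fractional powers $(n+k)^{j/3}=n^{j/3}(1+k/n)^{j/3}$ in descending powers of $n^{1/3}$. Collecting like powers then yields an asymptotic series whose coefficients are polynomials in $t_1,t_2$, beginning with
$$
\frac{\partial}{\partial t_2}\ln D_n(t_1,t_2)=-\frac{18}{5\ka^2}\,n^{5/3}+\cdots,
$$
a value one can already check against the leading product term $-216\,\bt_n^5\sim -216\,n^{5/3}/\ka^5=-\tfrac{18}{5\ka^2}n^{5/3}$ in (\ref{rb}); this termwise expansion is the bulk of the work.

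Next I would integrate in $t_2$, using
$$
\ln D_n(t_1,t_2)=\ln D_n(t_1,0)+\int_0^{t_2}\frac{\partial}{\partial s}\ln D_n(t_1,s)\,ds,
$$
assuming, as is standard, that the expansion is uniform in $s$ on compact sets so that it may be integrated term by term. This reproduces precisely the $t_2$-dependent contributions of (\ref{dnta}): the leading $-\tfrac{18t_2}{5\ka^2}n^{5/3}$, the $t_2$-dependent parts of the $n^{4/3}$, $n$ and $n^{2/3}$ terms, and the $t_2$-dependent pieces of $C_0,C_1,C_2,C_3$. The constant of integration $\ln D_n(t_1,0)$ is the Hankel determinant for the sextic Freud weight (\ref{wf1}), whose large-$n$ expansion is known \cite{MWC}; inserting it supplies the remaining $t_2$-independent terms, namely $\tfrac16 n^2\ln n$, $-\tfrac14(1+2\ln\ka)n^2$, $\ln(2\pi)\,n$, $-\tfrac{\ln n}{12}$, and the constant $-\tfrac{t_1^3}{135}-\tfrac{\ln3}{12}+\zeta'(-1)$, completing (\ref{dnta}).

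I expect the main obstacle to be twofold. The purely computational difficulty is the index-shift bookkeeping: to reach the order $O(n^{-4/3})$ shown in (\ref{dnta}) one must carry the expansions of $\bt_{n\pm1},\bt_{n\pm2}$ to high order and track cancellations inside the large products in (\ref{rb}), where the leading powers largely cancel and the surviving coefficients are sensitive to sub-leading terms. The more conceptual point is that the transcendental constants $\zeta'(-1)$ and $\ln(2\pi)$, together with the $n^2\ln n$, $n^2$ and $\ln n$ terms, cannot be produced by integrating a $t_2$-derivative; they belong to the $t_2=0$ baseline and must be imported from \cite{MWC}, ultimately reflecting the Barnes-$G$ asymptotics behind the Gaussian-type normalization. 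A fully self-contained derivation of these universal terms would instead require an Euler--Maclaurin analysis of $\ln D_n=\sum_{j=0}^{n-1}\ln h_j$ combined with the Coulomb-fluid value (\ref{aa}) of the Lagrange multiplier.
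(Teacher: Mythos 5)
Your proposal is correct and follows essentially the same route as the paper: substitute the expansion (\ref{bnz}) (with index shifts) into (\ref{rb}), integrate the resulting series in $t_2$ from $0$ to $t_2$, and add the known one-parameter expansion of $\ln D_n(t_1,0)$ from \cite{MWC} to supply the $t_2$-independent terms, including $\zeta'(-1)$ and $\ln(2\pi)$. Your leading-order check $-216\,\bt_n^5\sim -\tfrac{18}{5\ka^2}n^{5/3}$ is also consistent with the paper's intermediate expansion of $\frac{\partial}{\partial t_2}\ln D_n$.
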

\begin{proof}
Substituting (\ref{bnz}) into (\ref{rb}) and taking a large $n$ limit, we find
\begin{align}
\frac{\partial}{\partial t_2} \ln {D}_n(t_1, t_2)=&-\frac{18 n^{5/3}}{5\ka^2}+\frac{3t_2 n^{4/3}}{5\ka}+\frac{2(5t_1-3t_2^2)n}{75}-\frac{8t_2(3t_1-t_2^2)n^{2/3}}{15\ka^2}\no\\[8pt]
&-\frac{2(75t_1^2-120t_1 t_2^2+28t_2^4)n^{1/3}}{1125\ka}+\frac{2t_2(125t_1^2 -100t_1 t_2^2+18t_2^4)}{9375}\no\\[8pt]
&+\frac{3000t_1^3-9000t_1^2 t_2^2+4800 t_1 t_2^4-704t_2^6+ 16875}{50625 \ka^2 n^{1/3}}\no\\[8pt]
&-\frac{t_2(6000t_1^3-8400t_1^2 t_2^2+3360t_1 t_2^4-416t_2^6+50625)}{759375 \ka n^{2/3}}\no\\[8pt]
&-\frac{5t_1-4t_2^2}{675n}+O(n^{-4/3}).\no
\end{align}
Integrating the above with respect to $t_2$ over the interval $[0, t_2]$ gives
\begin{align}\label{dnt}
\frac{\ln D_n(t_1, t_2)}{\ln D_n(t_1, 0)}=&-\frac{18t_2 n^{5/3}}{5 \ka^2}+\frac{3t_2^2 n^{4/3}}{10 \ka}+\frac{2t_2(5t_1-t_2^2)n}{75}-\frac{2t_2^2(6t_1-t_2^2)n^{2/3}}{15 \ka^2}\no\\[8pt]
&-\frac{2t_2 (375t_1^2-200t_1t_2^2+28t_2^4)n^{1/3}}{5625 \ka} +\frac{t_2^2 (125 t_1^2 -50 t_1 t_2^2+ 6 t_2^4)}{9375}\no\\[8pt]
&+\frac{ t_2 (21000 t_1^3-21000 t_1^2 t_2^2+6720 t_1 t_2^4-704 t_2^6+118125)}{354375 \ka^2n^{1/3}}\no\\[8pt]
&-\frac{ t_2^2 (6000 t_1^3-4200 t_1^2 t_2^2+1120 t_1 t_2^4 -104 t_2^6+50625)}{1518750\ka n^{2/3}}\no\\[8pt]
&-\frac{t_2 (15t_1 -4t_2^2)}{2025 n}+O(n^{-4/3}).
\end{align}
The large $n$ asymptotic expansion of $\ln D_n(t_1, 0)$ has recently been obtained in \cite[Theorem 3.4]{MWC} ($t$ replaced by $-t_1$ for our problem):
\begin{align}\label{dnt1}
\ln D_n(t_1,0)=&\:\frac{1}{6} n^2 \ln n-\left(\frac{1}{4}+\frac{\ln \ka}{2}\right)n^2-\frac{3t_1 n^{4/3}}{2 \ka} +n \ln (2\pi)+\frac{t_1^2 n^{2/3}}{\ka^2} -\frac{\ln n}{12} -\frac{t_1^3}{135}\no\\[8pt]
&+\zeta'(-1)-\frac{\ln 3}{12}+\frac{t_1 (t_1^3+90)}{810 \ka n^{2/3}}+\frac{2t_1^5}{6075\ka^2 n^{4/3}} +O(n^{-2}).
\end{align}
The sum of (\ref{dnt}) and (\ref{dnt1}) gives the result in (\ref{dnta}).
\end{proof}

\begin{corollary}
The normalized constant $h_n(t_1,t_2)$ has the large $n$ asymptotic expansion
\begin{align}
\ln h_n(t_1,t_2)=&\:\frac{1}{3}n\ln n-\left(\frac{1}{3}+\ln\ka\right)n -\frac{6t_2 n^{2/3}}{\ka^2}-\frac{2(5t_1-t_2^2)n^{1/3}}{5 \ka}+\frac{1}{6}\ln n\no\\[8pt]
&+\ln (2 \pi)-\frac{1}{2}\ln \ka+\frac{2 t_2(5t_1-t_2^2)}{75}+\frac{2(15t_1^2-45t_2-12t_1 t_2^2+2t_2^4)}{45 \ka^2 n^{1/3}}\no\\[8pt]
&-\frac{750 t_1^2 t_2-400 t_1 t_2^3+5625 t_1-1125 t_2^2+ 56 t_2^5}{16875 \ka n^{2/3}}-\frac{1}{36n}+O(n^{-4/3}).\no
\end{align}
\end{corollary}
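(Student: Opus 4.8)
The plan is to exploit the product formula (\ref{hankel}), which gives $D_{n+1}/D_n=h_n$, so that
$$
\ln h_n(t_1,t_2)=\ln D_{n+1}(t_1,t_2)-\ln D_n(t_1,t_2).
$$
Since the preceding theorem supplies the complete large $n$ expansion (\ref{dnta}) of $\ln D_n(t_1,t_2)$, the whole computation reduces to forming this first difference and re-expanding in powers of $n$. No new analytic input is needed beyond (\ref{hankel}) and (\ref{dnta}).

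First I would record (\ref{dnta}) schematically as $\ln D_n=\sum_k c_k\, n^{\alpha_k}+(\text{logarithmic terms})$, where the exponents $\alpha_k$ run through $2,\,5/3,\,4/3,\,1,\,2/3,\,1/3,\,0,\,-1/3,\,-2/3,\,-1$ and the logarithmic terms are $\tfrac{1}{6}n^2\ln n$ and $-\tfrac{1}{12}\ln n$. For the pure-power part I would apply
$$
(n+1)^{\alpha}-n^{\alpha}=\alpha\,n^{\alpha-1}+\binom{\alpha}{2}n^{\alpha-2}+\binom{\alpha}{3}n^{\alpha-3}+\cdots,
$$
retaining enough terms that every contribution down to order $n^{-4/3}$ is captured; a single term $c_k n^{\alpha_k}$ feeds into several orders of $\ln h_n$, so each coefficient in the final answer is an accumulation of contributions from several $\alpha_k$. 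For the logarithmic terms I would expand $\ln(1+1/n)=1/n-1/(2n^2)+\cdots$ and evaluate $(n+1)^2\ln(n+1)-n^2\ln n$ and $\ln(n+1)-\ln n$; the dominant piece $\tfrac{1}{6}\big[(n+1)^2\ln(n+1)-n^2\ln n\big]$ produces the leading $\tfrac{1}{3}n\ln n$ together with the $\tfrac{1}{6}\ln n$ term and several polynomial-in-$n$ pieces that must be merged with those from the $n^2$ term.

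The routine but delicate step is the bookkeeping. The constant $C_0$ of (\ref{dnta}) (including $\zeta'(-1)$, the $-\tfrac{\ln 3}{12}$, and the $t_1,t_2$ constants) is the same in $\ln D_{n+1}$ and $\ln D_n$ and so cancels in the first difference; the surviving constants in $\ln h_n$ therefore come only from the $n$-term of (\ref{dnta}), which supplies $\ln(2\pi)$ and $\tfrac{2t_2(5t_1-t_2^2)}{75}$, and from the $\tfrac{1}{6}n^2\ln n$ and $-\tfrac14(1+2\ln\ka)n^2$ terms, whose difference contributes the $-\tfrac12\ln\ka$ and combines to give the correct $n\ln n$, $n$, and $\ln n$ coefficients. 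I expect the main obstacle to be organizing the expansion so that no cross-term among the $n^2\ln n$, $n^2$, and $n$ contributions is dropped, since these three govern the $n\ln n$, $n$, $\ln n$, and constant terms of $\ln h_n$ simultaneously; once the difference is assembled and simplified, it matches the stated expansion.
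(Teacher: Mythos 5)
Your proposal is correct and takes essentially the same route as the paper: the paper's proof likewise starts from (\ref{hankel}) to write $\ln h_n(t_1,t_2)=\ln D_{n+1}(t_1,t_2)-\ln D_n(t_1,t_2)$, substitutes the expansion (\ref{dnta}), and re-expands for large $n$. Your extra detail (the binomial expansion of $(n+1)^{\alpha}-n^{\alpha}$, the cancellation of $C_0$, and the careful treatment of the $n^2\ln n$ and $n^2$ terms that produce the $n\ln n$, $n$, $\ln n$, and constant terms) is simply the explicit execution of that same first-difference computation.
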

\begin{proof}
From (\ref{hankel}) we have
$$
\ln h_n(t_1,t_2)=\ln D_{n+1}(t_1, t_2)-\ln D_n(t_1, t_2).
$$
Substituting (\ref{dnta}) into above and taking a large $n$ limit, we obtain the desired result.
\end{proof}
In the end, we would like to point out that all the large $n$ asymptotic expansions obtained in this section can be easily extended up to any higher order.

\section{Discussion}
In the past few decades, most of the studies on semi-classical orthogonal polynomials were about the one-parameter (time parameter) weight functions.
In this paper, we investigated the Hankel determinant and orthogonal polynomials associated with a two-parameter Freud weight. It can be seen that the large $n$ asymptotics for this problem is derived based on the corresponding one-parameter problem ($t_2=0$). The study helps us to better understand the Hankel determinants and orthogonal polynomials with two or more parameter weight functions.

% Acknowledgements
\section*{Acknowledgments}
This work was partially supported by the National Natural Science Foundation of China under grant number 12001212, by the Fundamental Research Funds for the Central Universities under grant number ZQN-902 and by the Scientific Research Funds of Huaqiao University under grant number 17BS402.

\section*{Conflict of Interest}
The authors have no competing interests to declare that are relevant to the content of this article.
%This work does not have any conflict of interest.
\section*{Data Availability Statements}
Data sharing not applicable to this article as no datasets were generated or analysed during the current study.

\end{document}